\begin{document}

\preprint{APS/123-QED}

\title{Constructing locally indistinguishable orthogonal product bases in an $m \otimes n$ system}
\author{Guang-Bao Xu$^{1,2}$}

\author{Ying-Hui Yang$^{1,3}$}%

\author{Qiao-Yan Wen$^{1}$}

\author{Su-Juan Qin$^{1}$}

\author{Fei Gao$^{1}$}%
\email{gaofei\_bupt@hotmail.com}

\affiliation{%
 $^{1}$State Key Laboratory of Networking and Switching Technology, Beijing University of Posts and Telecommunications, Beijing, 100876, China\\
 $^{2}$College of Mathematics and Systems Science, Shandong University of Science and Technology, Qingdao, 266590, China\\
 $^{3}$School of Mathematics and Information Science, Henan Polytechnic University, Jiaozuo, 454000, China
}%

\date{\today}

\begin{abstract}
Recently, Zhang \emph{et al.} [Phys. Rev. A 92, 012332 (2015)] presented $4d-4$ orthogonal product states that are locally indistinguishable and completable in a $d\otimes d$ quantum system. Later, Zhang \emph{et al.} [arXiv: 1509.01814v2 (2015)] constructed  $2n-1$ orthogonal product states that are locally indistinguishable in $m\otimes n$ ($3\leq m \leq n$). In this paper, we construct a locally indistinguishable and completable orthogonal product basis with $4p-4$ members in a general $m\otimes n$ ($3\leq m \leq n$) quantum system, where $p$ is an arbitrary integer from $3$ to $m$, and  give a very simple but quite effective proof for its local indistinguishability.
Specially, we get a completable orthogonal product basis with $8$ members that cannot be locally distinguished in $m\otimes n$ ($3\leq m \leq n$)  when $p=3$. It is so far the smallest completable orthogonal product basis that cannot be locally distinguished in a $m\otimes n$ quantum system. On the other hand, we construct a small locally indistinguishable orthogonal product basis with $2p-1$ members, which is maybe uncompletable, in $m\otimes n$ ($3\leq m \leq n$ and $p$ is an arbitrary integer from $3$ to $m$). We also prove its local indistinguishability. As a corollary, we give an uncompletable orthogonal product basis with $5$ members that are locally indistinguishable in  $m\otimes n$ ($3\leq m \leq n$). All the results can lead us to a better understanding of the structure of a locally indistinguishable product basis in $m \otimes n$.
\begin{description}
\item[PACS numbers]
03.67.Dd, 03.67Mn, 03.67.Ac, 0365.Ud
\end{description}
\end{abstract}

\pacs{Valid PACS appear here}
\maketitle


\section{\label{sec:level1}Introduction\protect}
The local indistinguishability of orthogonal quantum states provides an effective tool to explore the relationship between quantum nonlocality and quantum entanglement. Bennett \emph{et al.} \cite{Bennett1999} firstly constructed a set of nine orthogonal product states that cannot be perfectly distinguished by local operations and classical communication (LOCC) in  $3\otimes 3$. Their work showed the counterintuitive phenomenon of nonlocality without entanglement, i.e., entanglement is not necessary for the local indistinguishability of orthogonal quantum states. Later, a simple proof for the nonlocality of the nine product states was given by Walgate \emph{et al.} \cite{Walgate2002}. Inspired by their work, many scholars are engaged in the research of this problem. With further research, numerous results \cite{CHB1999,Walgate2000,Chen2003,Niset2006,Jiang2010,Yu2011,Xin2008,Yang2013,Duan2010,Andrew2015,Chen2004,Rinaldis,MA2014,Feng2009} have been presented up to now. In spite of these huge advances, some basic problems are still incompletely solved, such as the smallest number of the states of an orthogonal product basis that can be completable and cannot be locally distinguished in a  high-dimensional bipartite quantum system.

To explore the local distinguishability of pure orthogonal product states in high-dimensional quantum system, Zhang \emph{et al.} \cite{Zhang2014} constructed $d^{2}$ orthogonal product basis quantum states that are locally indistinguishable in $d\otimes d$, where $d$ is odd and $d\geq 3$. Subsequently, Wang \emph{et al.} \cite{Wang2015} found a subset of the $d^{2}$ states, which contains $6d-9$ orthogonal states, is still locally indistinguishable. Later, Zhang \emph{et al.} \cite{Z.C.Zhang2015} constructed $4d-4$ orthogonal product states in $d\otimes d$ $(d\geq 3)$, and proved these states are locally indistinguishable. Yu \emph{et al.} \cite{Sixia2015} constructed $2d-1$ orthogonal states that are locally indistinguishable in $d\otimes d$ $(d\geq 3)$ and conjectured that any set of no more than $2(d-1)$ product states is locally distinguishable in $d\otimes d$ ($d\geq 3$). On the other hand, Wang \emph{et al.} \cite{Wang2015} presented a small set with only $3(m+n)-9$ orthogonal product states and proved that they cannot be perfectly distinguished by LOCC in the general $m\otimes n$ system, where $m\geq3$ and $n\geq 3$. Recently, Zhang \emph{et al.} \cite{Zhangzc2015} constructed $3n+m-4$ locally indistinguishable orthogonal product states that do not constitute a
unextendible product basis and presented a smaller set with $2n-1$ orthogonal product states that are LOCC indistinguishable in the general $m\otimes n$ ($3\leq m \leq n$) quantum system. All the results show it is a meaningful work to research the structure of the locally indistinguishable product basis and the smallest number of locally indistinguishable orthogonal product states in a high-dimensional quantum system.

In this paper, we construct a completable orthogonal product basis with $4p-4$ product states in $m\otimes n$ and give a simple proof for its local indistinguishability, where $3\leq m\leq n$ and $p$ is an arbitrary integer from $3$ to $m$. Specially, there exists a completable and locally indistinguishable orthogonal  basis with eight members in a $m\otimes n$ system ($3\leq m\leq n$). Eight is so far the
smallest number of locally indistinguishable states of a completable orthogonal product basis according to the existing results \cite{Zhangzc2015}. On the other hand, we construct a small orthogonal product basis that contains $2p-1$ orthogonal product states and is maybe uncompletable in $m\otimes n$ ,  and prove its local indistinguishability, where $3\leq m\leq n$ and $p$ is an arbitrary integer from $3$ to $m$.

\section{\label{sec:level1}PRELIMINARIES}

\theoremstyle{remark}
\newtheorem{definition}{\indent Definition}
\newtheorem{lemma}{\indent Lemma}
\newtheorem{theorem}{\indent Theorem}
\newtheorem{corollary}{\indent Corollary}
\def\QEDclosed{\mbox{\rule[0pt]{1.3ex}{1.3ex}}}
\def\QED{\QEDclosed}
\def\proof{\indent{\em Proof}.}
\def\endproof{\hspace*{\fill}~\QED\par\endtrivlist\unskip}

In this section, we introduce some definitions used throughout this paper.

\begin{definition} \cite{DiVincenzo2003}.
Consider a quantum system $H=\otimes_{i=1}^{q}H_{i}$ with $q$ parties. An orthogonal product basis (PB) is a set $S$ of pure orthogonal product states spanning a subspace $H_{S}$ of $H$. An uncompletable orthogonal product basis (UCPB) is a PB whose complementary subspace $H^{\perp}_{S}$ , \emph{i.e.}, the subspace in $H$ spanned by vectors that are orthogonal to all the vectors in $H_{S}$, contains fewer mutually orthogonal product states than its dimension. An unextendible product basis (UPB) is an uncompletable product basis for which $H^{\perp}_{S}$ contains no product state.
\end{definition}
  We call a PB is completable if it is not an
uncompletable orthogonal product basis.
\section{\label{sec:level1}Local indistinguishability of completable orthogonal product basis}

In this section, we construct a completable product basis with $4p-4$ members in $m\otimes n$ and prove its local indistinguishability, where $3\leq m\leq n$ and $p$ is an arbitrary integer from $3$ to $m$.
\begin{theorem}
In $m\otimes n$, the following $4p-4$ orthogonal product states,
\begin{eqnarray}
\nonumber
&&|\psi_{i}\rangle=\frac{1}{\sqrt{2}}|i\rangle_{A}|0-i\rangle_{B},\\
\nonumber
&&|\psi_{i+p-1}\rangle=\frac{1}{\sqrt{2}}|0-i\rangle_{A}|j\rangle_{B},\\
&&|\psi_{i+2p-2}\rangle=\frac{1}{\sqrt{2}}|i\rangle_{A}|0+i\rangle_{B},\\
\nonumber
&&|\psi_{i+3p-3}\rangle=\frac{1}{\sqrt{2}}|0+i\rangle_{A}|j\rangle_{B},\\
\nonumber
\end{eqnarray}
cannot be perfectly distinguished by LOCC, where $3\leq m\leq n$, $p$ is an arbitrary integer from $3$ to $m$, $j=i+1$ when $i=1,\cdots,\, p-2$ and $j=1$ while $i=p-1$. It is noted that $\frac{1}{\sqrt{2}}|\alpha\pm \beta\rangle=\frac{1}{\sqrt{2}}(|\alpha\rangle\pm|\beta\rangle)$
for $0\leq \alpha<\beta\leq p-1$ in this paper. 
\end{theorem}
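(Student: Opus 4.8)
The plan is to use a Walgate \emph{et al.} type argument and show that \emph{neither party can perform the first nontrivial measurement} in any LOCC protocol that perfectly discriminates the $|\psi_{i}\rangle$'s. We invoke the standard (necessary) fact that every measurement in such a protocol is orthogonality preserving: if the pre-measurement states are mutually orthogonal, then in each outcome branch the post-measurement states are again mutually orthogonal. Since all $4p-4$ states are supported on $\mathcal{H}':=\mathrm{span}\{|0\rangle,\dots,|p-1\rangle\}_{A}\otimes\mathrm{span}\{|0\rangle,\dots,|p-1\rangle\}_{B}$, it suffices to show that any orthogonality-preserving first measurement of Alice (or of Bob) acts on $\mathcal{H}'$ as a multiple of the identity. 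Such a measurement is \emph{trivial}: it changes the ensemble only by a harmless local isometry and an overall rescaling, so the discrimination problem after it is LOCC-equivalent to the original; iterating, the protocol never makes progress and so cannot terminate with a correct answer---a contradiction.

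Assume Alice measures first with POVM elements $E_{s}=M_{s}^{\dagger}M_{s}$, $\sum_{s}E_{s}=\mathbb{I}$; fix an outcome $s$ and write $e_{kl}=\langle k|E_{s}|l\rangle$, so that orthogonality preservation reads $\langle\psi_{a}|(E_{s}\otimes\mathbb{I})|\psi_{b}\rangle=0$ for all $a\neq b$. First I would use the pairs $|\psi_{i}\rangle=\tfrac{1}{\sqrt{2}}|i\rangle_{A}|0-i\rangle_{B}$ and $|\psi_{i'}\rangle=\tfrac{1}{\sqrt{2}}|i'\rangle_{A}|0-i'\rangle_{B}$ with $i\neq i'$: since $\langle 0-i|0-i'\rangle=1\neq 0$, the condition collapses to $e_{ii'}=0$, so all off-diagonal entries among the indices $\{1,\dots,p-1\}$ vanish. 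Next, the cross pairs $|\psi_{i}\rangle$ and $|\psi_{i'+p-1}\rangle=\tfrac{1}{\sqrt{2}}|0-i'\rangle_{A}|j'\rangle_{B}$ have overlapping Bob parts exactly when $i=j'$ (because $\langle 0-i|j'\rangle=-\delta_{i,j'}$); since $i'\mapsto j'$ is the cyclic shift on $\{1,\dots,p-1\}$ (with the wrap-around $j'=1$ at $i'=p-1$), $j'$ runs over all of $\{1,\dots,p-1\}$, and combining $\langle i|E_{s}|0-i'\rangle=e_{i0}-e_{ii'}=0$ with $e_{ii'}=0$ gives $e_{i0}=0$, hence $e_{0i}=0$, for every $i\in\{1,\dots,p-1\}$. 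Finally, the pairs $|\psi_{i+p-1}\rangle=\tfrac{1}{\sqrt{2}}|0-i\rangle_{A}|j\rangle_{B}$ and $|\psi_{i+3p-3}\rangle=\tfrac{1}{\sqrt{2}}|0+i\rangle_{A}|j\rangle_{B}$ share Bob's vector $|j\rangle_{B}$, so the condition becomes $\langle 0-i|E_{s}|0+i\rangle=e_{00}+e_{0i}-e_{i0}-e_{ii}=0$, which with the previous step yields $e_{ii}=e_{00}$. Hence $E_{s}$ compressed to $\mathrm{span}\{|0\rangle,\dots,|p-1\rangle\}_{A}$ equals $e_{00}\,\mathbb{I}_{p}$, i.e. Alice's first measurement is trivial on the support.

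The computation for Bob is completely parallel, with $F_{s}=N_{s}^{\dagger}N_{s}$ and $f_{kl}=\langle k|F_{s}|l\rangle$. The pairs $|\psi_{i+p-1}\rangle,|\psi_{i'+p-1}\rangle$ with $i\neq i'$ (Alice parts $|0-i\rangle$, $|0-i'\rangle$ with $\langle 0-i|0-i'\rangle=1$) force $f_{jj'}=0$ for distinct $j,j'\in\{1,\dots,p-1\}$; the pairs $|\psi_{i+p-1}\rangle,|\psi_{i}\rangle$ (Alice parts $|0-i\rangle$, $|i\rangle$ with $\langle 0-i|i\rangle=-1$) give $\langle j|F_{s}|0-i\rangle=f_{j0}-f_{ji}=0$, which together with $j=j(i)\neq i$ and the previous line yields $f_{j0}=0$, hence $f_{0j}=0$, for all $j\in\{1,\dots,p-1\}$ (again using that $j(\cdot)$ is onto); and the pairs $|\psi_{i}\rangle,|\psi_{i+2p-2}\rangle$ (shared Alice part $|i\rangle_{A}$) give $\langle 0-i|F_{s}|0+i\rangle=f_{00}+f_{0i}-f_{i0}-f_{ii}=0$, i.e. $f_{ii}=f_{00}$. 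So Bob's first measurement is $f_{00}\,\mathbb{I}_{p}$ on $\mathrm{span}\{|0\rangle,\dots,|p-1\rangle\}_{B}$ as well.

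Combining the two halves, in any LOCC protocol the first nontrivial measurement must be performed by Alice or by Bob, but every orthogonality-preserving first measurement of either party acts as a scalar on $\mathcal{H}'$ and hence is trivial; induction on the number of rounds then shows no finite protocol can perfectly distinguish the states. The step I expect to be the genuine obstacle is not any single line of algebra but the bookkeeping: for each measuring party one must correctly identify which pairs of states fail to be orthogonal on the non-measured side---so that the surviving equation is exactly the right linear relation among the $e_{kl}$ (resp. $f_{kl}$)---and check that the shifted index $j$, wrap-around included, really sweeps all of $\{1,\dots,p-1\}$, so that every diagonal entry gets pinned to $e_{00}$ (resp. $f_{00}$).
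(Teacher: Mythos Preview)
Your proposal is correct and follows essentially the same approach as the paper's proof: both arguments fix a first measurement by Alice, use the same three families of state pairs (the $|\psi_i\rangle$--$|\psi_{i'}\rangle$ pairs to kill the off-diagonal $e_{ii'}$ on $\{1,\dots,p-1\}$, the $|\psi_j\rangle$--$|\psi_{i+p-1}\rangle$ pairs to kill $e_{0j}=e_{j0}$, and the $|\psi_{i+p-1}\rangle$--$|\psi_{i+3p-3}\rangle$ pairs to force $e_{ii}=e_{00}$), and then conclude that the compressed POVM element is $e_{00}\mathbb{I}_p$. The only minor differences are cosmetic: you spell out Bob's side explicitly and invoke the isometry/induction argument, whereas the paper handles Bob ``by symmetry'' and phrases triviality in terms of equal outcome probabilities; both are standard and equivalent here.
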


\begin{proof}
Inspired by Ref. \cite{Sixia2015,Z.C.Zhang2015}, we give a very simple but quite effective method to prove the local indistinguishability of the $4p-4$ orthogonal product states. To distinguish these states, one of the two parties (Alice and Bob) has to start with a nondisturbing measurement, i.e., the postmeasurement states should be mutually orthogonal. Without loss of generality, suppose that Alice goes first with a set of general $m\times m$ positive operator-valued measure (POVM) elements $M_{t}^{\dag}M_{t}$ $(t=1,\cdots, l)$, where
\begin{equation}
\begin{split}
M_{t}^{\dag}M_{t}=
\left[
  \begin{array}{cccc}
    a_{00}^{t}&a_{01}^{t} &\cdots &a_{0(m-1)}^{t}\\
    a_{10}^{t}&a_{11}^{t} &\cdots &a_{1(m-1)}^{t}\\
    \vdots &\vdots  &\ddots &\vdots\\
    a_{(m-1)0}^{t} &a_{(m-1)1}^{t} &\cdots &a_{(m-1)(m-1)}^{t}
  \end{array}
\right]
\end{split}\nonumber
\end{equation}

The postmeasurement states $\{M_{t}\otimes I_{B}|\psi_{i}\rangle: i=1,\cdots, 4p-4\}$ should be mutually orthogonal.
  For the states $|\psi_{i}\rangle$
and
$|\psi_{j}\rangle$, where $1\leq i \leq p-1$, $1\leq j \leq p-1$ and $i \neq j$,  we have $$\langle i|M_{t}^{\dag}M_{t}|j\rangle\langle0-i|0-j\rangle=0,$$ Thus $\langle i|M_{t}^{\dag}M_{t}|j\rangle=0$, which means that $a_{ij}^{t}=0$ for $1\leq i \leq p-1$, $1\leq j \leq p-1$ and $i \neq j$.

For the states $|\psi_{j}\rangle$ and $|\psi_{i+p-1}\rangle$, where $j=i+1$ when $i=1,\cdots,p-2$ and $j=1$ when $i=p-1$, we can get $\langle j|M_{t}^{\dag}M_{t}|0-i\rangle\langle0-j|j\rangle=0$, Thus $\langle j|M_{t}^{\dag}M_{t}|0-i\rangle=\langle j|M_{t}^{\dag}M_{t}|0\rangle=0$. Similarly, we can get $\langle0|M_{t}^{\dag}M_{t}|j\rangle=0$. That means $a_{0j}^{t}=a_{j0}^{t}=0$ for $j=1,2,\cdots,p-1$.

For the states $|\psi_{i+p-1}\rangle=\frac{1}{\sqrt{2}}|0-i\rangle_{A}|j\rangle_{B}$ and $|\psi_{i+3p-3}\rangle=\frac{1}{\sqrt{2}}|0+i\rangle_{A}|j\rangle_{B}$, where $j=i+1$ when $i=1,\cdots, p-2$ and $j=1$ while $i=p-1$, we have $\langle0+i|M_{t}^{\dag}M_{t}|0-i\rangle\langle j|j\rangle=0$. Thus $\langle0|M_{t}^{\dag}M_{t}|0\rangle=\langle i|M_{t}^{\dag}M_{t}|i\rangle$. That is $a_{00}^{t}=a_{ii}^{t}$ for $i=1,\cdots,p-1$.

Therefore, we have
\begin{equation}
\begin{split}
M_{t}^{\dag}M_{t}=
\left[
  \begin{array}{cccccc}
    a_{00}^{t} &0 &\cdots &0 &a_{0p}^{t} &\cdots \\
    0 &a_{00}^{t} &\cdots &0  &a_{1p}^{t} &\cdots \\
    \vdots &\vdots &\ddots &\vdots &\vdots &\ddots \\
    0 &0 &\cdots &a_{00}^{t} &a_{2p}^{t} &\cdots \\
    a_{p0}^{t} &a_{p1}^{t} &\cdots &a_{p(p-1)}^{t} &a_{pp}^{t} &\cdots \\
    \vdots &\vdots &\ddots &\vdots &\vdots &\ddots \\
    \end{array}
\right]
\end{split}\nonumber
\end{equation}

Now we consider the probability of the measurement outcome corresponding to the measurement operator $M_{t}$ for each of the $4p-4$ states. It is easy to see
$$\langle\psi_{i}|(M_{t}^{\dag}M_{t})\otimes(I^{\dag}I)|\psi_{i}\rangle=a_{00}^{t}\,\,\,(\forall\,i\in\{1,\,2,\,\cdots,4p-4\})$$
for $t=1,2,\cdots, l$, where $\sum_{t=1}^{l}a_{00}^{t}=1$ according to the completeness of the measurement operators. This means any one of the $4p-4$ states can lead to the outcome that is corresponding to $M_{t}$ with the same probability $a_{00}^{t}$, i.e., the measurement $\{M_{t}\otimes I_{B}\}$ is trivial to the $4p-4$ states. In other words, Alice cannot get any information about which the measured state will be by the measurement $\{M_{t}\}$. Then if Bob starts with a nondisturbing measurement after Alice has done, he cannot get any useful information as Alice does because of the symmetry of the $4p-4$ states. That is, they cannot perfectly distinguish these states by LOCC.

On the other hand, if Bob goes first and then Alice does, these states cannot be perfectly distinguished either, which can be proved with the same method. So the $4p-4$ states in $m\otimes n$ cannot be perfectly distinguished by LOCC. This completes the proof.
\end{proof}

In fact, the $4p-4$ states of (1) are completable since they can become a completed orthogonal product basis in $m\otimes n$ by adding the following $mn-4p+4$ states:\\
$\{|0\rangle|0\rangle\}\cup\{|i\rangle|1\rangle\,|\,2\leq i \leq p-2\}\cup\{|i\rangle|2\rangle\, |\,3\leq i\leq p-1\}\cup\{|i\rangle|j\rangle\, |\,3\leq j \leq p-2; j+1\leq i\leq p-1 \,and\, 1\leq i \leq j-2\}\cup\{|i\rangle|p-1\rangle\,|\,1\leq i\leq p-3\}\cup\{|i\rangle|j\rangle\,|\,p\leq i\leq m-1,\, 0\leq j\leq n-1\}\cup\{|i\rangle|j\rangle\,|\,0\leq i\leq p-1,\, p\leq j\leq n-1\}$.

From Theorem 1, we know that the parameter $p$ can be an arbitrary integer from $3$ to $m$. Specially, we have the following corollary by Theorem 1 when $p=3$.
\begin{corollary}
In $m\otimes n$, the eight orthogonal product states,
\begin{eqnarray}
\nonumber
&&|\phi_{1,2}\rangle=\frac{1}{\sqrt{2}}|1\rangle_{A}|0\pm1\rangle_{B},\\
\nonumber
&&|\phi_{3,4}\rangle=\frac{1}{\sqrt{2}}|2\rangle_{A}|0\pm2\rangle_{B},\\
&&|\phi_{5,6}\rangle=\frac{1}{\sqrt{2}}|0\pm1\rangle_{A}|2\rangle_{B},\\
\nonumber
&&|\phi_{7,8}\rangle=\frac{1}{\sqrt{2}}|0\pm2\rangle_{A}|1\rangle_{B},
\end{eqnarray}
cannot be perfectly distinguished by LOCC, where $3\leq m\leq n$.
\end{corollary}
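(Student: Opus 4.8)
The plan is to deduce the Corollary directly from Theorem 1 by simply specializing the parameter. Setting $p=3$ in the general construction of Eq.~(1), the index ranges collapse dramatically: the parameter $i$ runs over $\{1,\dots,p-1\}=\{1,2\}$, so each of the four families $|\psi_i\rangle,|\psi_{i+p-1}\rangle,|\psi_{i+2p-2}\rangle,|\psi_{i+3p-3}\rangle$ contributes exactly two states, giving $4p-4 = 8$ states in total, in agreement with the stated count.

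Next I would verify that the relabeled states $|\phi_{1}\rangle,\dots,|\phi_{8}\rangle$ of Eq.~(2) are precisely these eight states. Writing out Eq.~(1) with $p=3$: for the first family, $i=1$ gives $\tfrac{1}{\sqrt2}|1\rangle_A|0-1\rangle_B$ and $i=2$ gives $\tfrac{1}{\sqrt2}|2\rangle_A|0-2\rangle_B$; for the third family ($+$ instead of $-$) we similarly get $\tfrac{1}{\sqrt2}|1\rangle_A|0+1\rangle_B$ and $\tfrac{1}{\sqrt2}|2\rangle_A|0+2\rangle_B$. Together these four are exactly $|\phi_{1,2}\rangle$ and $|\phi_{3,4}\rangle$. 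For the second and fourth families one uses the index rule from Theorem~1: $j=i+1$ for $i=1,\dots,p-2$ and $j=1$ for $i=p-1$, i.e.\ $j=2$ when $i=1$ and $j=1$ when $i=2$. Thus $|\psi_{i+p-1}\rangle=\tfrac{1}{\sqrt2}|0-i\rangle_A|j\rangle_B$ yields $\tfrac{1}{\sqrt2}|0-1\rangle_A|2\rangle_B$ and $\tfrac{1}{\sqrt2}|0-2\rangle_A|1\rangle_B$, and the fourth family yields the corresponding $+$ versions; these four are exactly $|\phi_{5,6}\rangle$ and $|\phi_{7,8}\rangle$. Hence the eight states of Eq.~(2) coincide (up to reordering) with the $4p-4$ states of Theorem~1 at $p=3$.

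Since $3\le m\le n$ and $p=3$ is a legitimate value of the parameter (namely $p=3$ lies in the range $3$ to $m$, which is nonempty precisely because $m\ge 3$), Theorem~1 applies verbatim and asserts that these $4p-4=8$ orthogonal product states cannot be perfectly distinguished by LOCC. This establishes the Corollary. I anticipate no real obstacle here: the only things to be careful about are the bookkeeping of the index substitution $p=3$ in the four subscript families and the translation of the $j$-rule into the explicit pairings $\{(1,2),(2,1)\}$, together with the observation that the states in Eq.~(2) are merely a convenient relabeling of the Theorem~1 states with a slightly different sign/ordering convention. One might additionally remark, as the surrounding text does, that completability for $p=3$ follows from the general completion set exhibited after the proof of Theorem~1, specialized to $p=3$.
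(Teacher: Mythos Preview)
Your proposal is correct and matches the paper's approach exactly: the paper simply states that the corollary follows from Theorem~1 upon taking $p=3$, and your write-up supplies the (routine) index bookkeeping confirming that the eight states of Eq.~(2) are precisely the $4p-4$ states of Eq.~(1) in this case. The only remark is that the paper itself offers no further argument beyond ``set $p=3$,'' so your verification of the $j$-rule and the relabeling is, if anything, more detailed than what appears there.
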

  From Refs. \cite{Z.C.Zhang2015,Zhangzc2015,Zhang2014,Wang2015,Sixia2015},
we know that the eight states is so far the smallest completable and locally indistinguishable orthogonal product basis in $m\otimes n$ ($3\leq m\leq n$). In fact, there exist two local unitary operators, say $U_{m\times m}$ and $V_{n\times n}$, such that
\begin{eqnarray}
\nonumber
&&U_{m\times m}|0\rangle_{A}=|1\rangle_{A},\,U_{m\times m}|1\rangle_{A}=|2\rangle_{A},\\
&&U_{m\times m}|2\rangle_{A}=|0\rangle_{A},\,\,V_{n\times n}|0\rangle_{B}=|1\rangle_{B},\\
\nonumber
&&V_{n\times n}|1\rangle_{B}=|2\rangle_{B},\,\,\,\,\,V_{n\times n}|2\rangle_{B}=|0\rangle_{B}.\\\nonumber
\end{eqnarray}
Since the local unitary operations do not change the local indistinguishability of orthogonal states, we have the following conclusion by Corollary 1.
\begin{corollary}
In $m\otimes n$, the eight orthogonal product states,
\begin{eqnarray}
\nonumber
&&|\phi'_{1,2}\rangle=\frac{1}{\sqrt{2}}|2\rangle_{A}|1\pm2\rangle_{B},\\
\nonumber
&&|\phi'_{3,4}\rangle=\frac{1}{\sqrt{2}}|0\rangle_{A}|0\pm1\rangle_{B},\\
&&|\phi'_{5,6}\rangle=\frac{1}{\sqrt{2}}|1\pm2\rangle_{A}|0\rangle_{B},\\
\nonumber
&&|\phi'_{7,8}\rangle=\frac{1}{\sqrt{2}}|0\pm1\rangle_{A}|2\rangle_{B}
\end{eqnarray}
cannot be perfectly distinguished by LOCC, where $3\leq m\leq n$.
\end{corollary}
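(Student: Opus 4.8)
The plan is to derive Corollary~2 immediately from Corollary~1, exploiting the fact that the eight primed states are precisely the images of the eight states of Corollary~1 under the fixed product unitary $U_{m\times m}\otimes V_{n\times n}$ specified above, together with the elementary observation that a local unitary applied to one party at a time is an LOCC operation and therefore cannot change whether a set of orthogonal states is perfectly LOCC-distinguishable.

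First I would verify, state by state, that $(U_{m\times m}\otimes V_{n\times n})|\phi_{k}\rangle=|\phi'_{k}\rangle$ up to an overall phase for $k=1,\dots,8$. Using $U_{m\times m}:|0\rangle_{A}\mapsto|1\rangle_{A}\mapsto|2\rangle_{A}\mapsto|0\rangle_{A}$ and the analogous $3$-cycle of $V_{n\times n}$ on $|0\rangle_{B},|1\rangle_{B},|2\rangle_{B}$, one obtains, e.g., $(U\otimes V)|\phi_{1,2}\rangle=\frac{1}{\sqrt{2}}|2\rangle_{A}|1\pm2\rangle_{B}=|\phi'_{1,2}\rangle$, $(U\otimes V)|\phi_{3}\rangle=\frac{1}{\sqrt{2}}|0\rangle_{A}|1+0\rangle_{B}=|\phi'_{3}\rangle$, $(U\otimes V)|\phi_{4}\rangle=\frac{1}{\sqrt{2}}|0\rangle_{A}|1-0\rangle_{B}=-|\phi'_{4}\rangle$, and likewise $(U\otimes V)|\phi_{5,6}\rangle=|\phi'_{5,6}\rangle$, $(U\otimes V)|\phi_{7}\rangle=|\phi'_{7}\rangle$, $(U\otimes V)|\phi_{8}\rangle=-|\phi'_{8}\rangle$. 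The only subtlety is the sign flip $|\alpha-\beta\rangle\mapsto-|\beta-\alpha\rangle$ produced whenever the cyclic relabelling reverses the order of the two basis indices inside a term $\frac{1}{\sqrt{2}}|\cdot\pm\cdot\rangle$; since an overall phase is physically irrelevant, it does not affect distinguishability.

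Next I would run the standard reduction: suppose, for contradiction, that the eight primed states can be perfectly distinguished by some LOCC protocol. Composing that protocol with the local unitaries $U_{m\times m}^{\dagger}$ on Alice's side and $V_{n\times n}^{\dagger}$ on Bob's side (both legitimate local, hence LOCC, operations) produces an LOCC protocol that perfectly distinguishes the eight states of Corollary~1, contradicting Corollary~1. Hence the eight primed states cannot be perfectly distinguished by LOCC. I do not expect any genuine obstacle here: the entire content is the bookkeeping of the cyclic relabelling in the first step, plus the folklore invariance of LOCC-distinguishability under local unitaries. The only point worth stating explicitly is that the required permutation unitaries $U_{m\times m}$ and $V_{n\times n}$ exist for every $m,n\geq 3$ --- they act as the $3$-cycle on the first three basis vectors and as the identity on the remaining ones --- so the argument covers the full range $3\leq m\leq n$.
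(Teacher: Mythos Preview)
Your proposal is correct and follows essentially the same approach as the paper: the paper also derives Corollary~2 directly from Corollary~1 by applying the local unitaries $U_{m\times m}\otimes V_{n\times n}$ (the $3$-cycles on the first three basis vectors) and invoking the invariance of LOCC-indistinguishability under local unitaries. Your write-up is in fact more explicit than the paper's, spelling out the state-by-state verification, the harmless global sign flips, and the existence of the required permutation unitaries for all $m,n\geq 3$.
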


Now, we give a simple proof for the local indistinguishability of the $d^{2}$ states in Ref. \cite{Zhang2014} and its subset with $6d-9$ states  in Ref. \cite{Wang2015} by Corollary 2, where $d$ is odd. In fact, both the $d^{2}$ states and its subset with $6d-9$ states in $d\otimes d$ contain the following eight states,
\begin{eqnarray}
\nonumber
&&|\varphi_{1,2}\rangle=\frac{1}{\sqrt{2}}|(d+3)/2\rangle_{A}[|(d+1)/2\rangle\pm|(d+3)/2\rangle]_{B},\\
\nonumber
&&|\varphi_{3,4}\rangle=\frac{1}{\sqrt{2}}|(d-1)/2\rangle_{A}[|(d-1)/2\rangle\pm|(d+1)/2\rangle]_{B},\\
&&|\varphi_{5,6}\rangle=\frac{1}{\sqrt{2}}[|(d+1)/2\rangle\pm|(d+3)/2\rangle]_{A}|(d-1)/2\rangle_{B},\,\,\\
\nonumber
&&|\varphi_{7,8}\rangle=\frac{1}{\sqrt{2}}[|(d-1)/2\rangle\pm|(d+1)/2\rangle]_{A}|(d+3)/2\rangle_{B}.\\
\nonumber
\end{eqnarray}
It is obvious that there exist two local unitary operators, say $U'_{d\times d}$ and $V'_{d\times d}$, which satisfy
\begin{eqnarray}
\nonumber
&&U'_{d\times d}|0\rangle_{A}=\frac{1}{\sqrt{2}}|(d-1)/2\rangle_{A},
U'_{d\times d}|1\rangle_{A}=\frac{1}{\sqrt{2}}|(d+1)/2\rangle_{A},\\
\nonumber
&&U'_{d\times d}|2\rangle_{A}=\frac{1}{\sqrt{2}}|(d+3)/2\rangle_{A},
V'_{d\times d}|0\rangle_{B}=\frac{1}{\sqrt{2}}|(d-1)/2\rangle_{B},\\
\nonumber
&&V'_{d\times d}|1\rangle_{B}=\frac{1}{\sqrt{2}}|(d+1)/2\rangle_{B},
V'_{d\times d}|2\rangle_{B}=\frac{1}{\sqrt{2}}|(d+3)/2\rangle_{B},
\nonumber
\end{eqnarray}
can change the eight states of (4) to the eight states of (5) when $m=n=d\geq 3$. Since the local unitary operations do not change the local indistinguishability of orthogonal quantum states, the eight states of (5) cannot be locally distinguished by Corollary 2. This means the local indistinguishability of the $d^{2}$ states \cite{Zhang2014} and its subset with $6d-9$ states \cite{Wang2015} can be directly got by Corollary 2.
\section{Local indistinguishability of small  orthogonal product basis}
In this section, we construct a small orthogonal product basis with $2p-1$ members in $m\otimes n$ ($3\leq m\leq n$, $p$ is an arbitrary integer from $3$ to $m$), which is maybe uncompletable, and prove its local indistinguishability.

\begin{theorem}
In $m\otimes n$, the following $2p-1$ orthogonal product states,
\begin{eqnarray}
\nonumber
&&|\psi_{i}\rangle=\frac{1}{\sqrt{2}}|i\rangle_{A}|0-i\rangle_{B},\\
&&|\psi_{i+p-1}\rangle=\frac{1}{\sqrt{2}}|0-i\rangle_{A}|j\rangle_{B},\\
\nonumber
&&|\psi_{2p-1}\rangle=\frac{1}{p}|0+1+\cdots+(p-1)\rangle_{A}|0+1+\cdots+(p-1)\rangle_{B}
\nonumber
\end{eqnarray}
cannot be perfectly distinguished by LOCC, where $3\leq m\leq n$, $p$ is an arbitrary integer from $3$ to $m$, $j=i+1$ when $i=1,\cdots, p-2$ and   $j=1$ while $i=p-1$.
\end{theorem}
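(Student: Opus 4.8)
The plan is to imitate almost verbatim the argument used for Theorem~1. Suppose Alice moves first with a nondisturbing POVM $\{M_{t}^{\dag}M_{t}\}_{t=1}^{l}$ on her $m$-dimensional space, i.e.\ the postmeasurement states $\{M_{t}\otimes I_{B}|\psi_{n}\rangle : n=1,\dots,2p-1\}$ are required to be pairwise orthogonal for every $t$; write $M_{t}^{\dag}M_{t}=(a_{kl}^{t})$. I will extract from these orthogonality relations enough linear conditions on the $a_{kl}^{t}$ to force $M_{t}^{\dag}M_{t}$ to act as $a_{00}^{t}I$ on the span of $\{|0\rangle_{A},\dots,|p-1\rangle_{A}\}$, exactly as in the matrix displayed in the proof of Theorem~1.

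The first two groups of conditions are literally those of Theorem~1: pairing $|\psi_{i}\rangle$ with $|\psi_{i'}\rangle$ for $1\le i\ne i'\le p-1$ gives $a_{ii'}^{t}=0$, and pairing $|\psi_{j}\rangle$ with $|\psi_{i+p-1}\rangle$ (with $j$ the cyclic partner of $i$) gives $a_{0j}^{t}=a_{j0}^{t}=0$ for $j=1,\dots,p-1$. The only new point is that the sum-state $|\psi_{2p-1}\rangle=\frac{1}{p}|S\rangle_{A}|S\rangle_{B}$, with $|S\rangle=|0\rangle+\cdots+|p-1\rangle$, now plays the role of the $|0+i\rangle$ states in establishing the diagonal condition. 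Its overlap with $|\psi_{i}\rangle$ vanishes automatically because $\langle 0-i|S\rangle_{B}=0$, so it imposes nothing there; its overlap with $|\psi_{i+p-1}\rangle$ is proportional to $\langle 0-i|M_{t}^{\dag}M_{t}|S\rangle_{A}\,\langle j|S\rangle_{B}$, and since $\langle j|S\rangle_{B}=1$ we need $\langle 0-i|M_{t}^{\dag}M_{t}|S\rangle_{A}=0$; expanding this and using the off-diagonal entries already killed above collapses it to $a_{00}^{t}-a_{ii}^{t}=0$, hence $a_{ii}^{t}=a_{00}^{t}$ for $i=1,\dots,p-1$. Thus $M_{t}^{\dag}M_{t}$ has exactly the same block form as in Theorem~1.

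With this structure, $\langle\psi_{n}|(M_{t}^{\dag}M_{t})\otimes I_{B}|\psi_{n}\rangle=a_{00}^{t}$ for every $n=1,\dots,2p-1$ by a one-line computation; for the stopper this uses $\langle S|M_{t}^{\dag}M_{t}|S\rangle=p\,a_{00}^{t}$ (all off-diagonal block entries vanish), together with $\langle S|S\rangle=p$ and the $1/p^{2}$ prefactor. Since $\sum_{t}a_{00}^{t}=1$ by completeness, Alice's first measurement is trivial and reveals nothing. The roles of Alice and Bob are interchangeable here (the two families have the same structure and the stopper is symmetric, with $\{|0\rangle_{B},\dots,|p-1\rangle_{B}\}$ the relevant subspace on Bob's side), so the same computation shows Bob's first nondisturbing measurement is likewise trivial; hence the $2p-1$ states cannot be perfectly distinguished by LOCC. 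I expect the only real work to be the bookkeeping in the new step: checking that the stopper's orthogonality relation with the $|\psi_{i+p-1}\rangle$ genuinely pins $a_{ii}^{t}$ to $a_{00}^{t}$ (where the $1/p$ normalization and the precise pattern of previously-annihilated entries matter), and that the stopper's outcome probability comes out equal to $a_{00}^{t}$ rather than something $p$-dependent. There is no conceptual obstacle beyond that, the proof being a streamlined adaptation of the Theorem~1 argument with one family of states traded for the single stopper vector.
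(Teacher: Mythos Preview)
Your proposal is correct and follows essentially the same route as the paper's own proof: the off-diagonal vanishing is lifted verbatim from Theorem~1, the diagonal equality $a_{00}^{t}=a_{ii}^{t}$ is extracted from the orthogonality of $|\psi_{2p-1}\rangle$ with the states $|\psi_{i+p-1}\rangle$, and the triviality and symmetry conclusions are identical. If anything you are slightly more explicit than the paper in checking that the stopper's outcome probability equals $a_{00}^{t}$ via $\langle S|M_{t}^{\dag}M_{t}|S\rangle=p\,a_{00}^{t}$.
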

\begin{proof} Similarly to the proof of Theorem 1, one of the two parties (Alice and Bob) has to start with a nondisturbing measurement to distinguish these states, i.e., the postmeasurement states should be mutually orthogonal. Without loss of generality, suppose that Alice goes first with a set of general $m\times m$ positive operator-valued measure (POVM) elements $M_{t}^{\dag}M_{t}$ $(t=1,\cdots, l)$, where
\begin{equation}
\begin{split}
M_{t}^{\dag}M_{t}=
\left[
  \begin{array}{cccc}
    a_{00}^{t}&a_{01}^{t} &\cdots &a_{0(m-1)}^{t}\\
    a_{10}^{t}&a_{11}^{t} &\cdots &a_{1(m-1)}^{t}\\
    \vdots &\vdots  &\ddots &\vdots\\
    a_{(m-1)0}^{t} &a_{(m-1)1}^{t} &\cdots &a_{(m-1)(m-1)}^{t}
  \end{array}
\right]
\end{split}\nonumber
\end{equation}

We can get $a_{ij}^{t}=0$ for $1\leq i \leq p-1$, $1\leq j \leq p-1$ and $i \neq j$, and $a_{0j}^{t}=a_{j0}^{t}=0$ for $j=1,2,\cdots,p-1$ by the same way as the proof of Theorem 1 since the postmeasurement states should be mutually orthogonal.

 For the states $|\psi_{i+p-1}\rangle=\frac{1}{\sqrt{2}}|0-i\rangle_{A}|j\rangle_{B}$
and $
|\psi_{2p-1}\rangle=\frac{1}{p}|0+1+\cdots+(p-1)\rangle_{A}|0+1+\cdots+(p-1)\rangle_{B}$, where $j=i+1$ when $i=1,\cdots, p-2$ and $j=1$ while $i=p-1$, we have
$\langle0-i|M_{t}^{\dag}M_{t}|0+\cdots+(p-1)\rangle\langle j|0+\cdots+(p-1)\rangle=\langle0-i|M_{t}^{\dag}M_{t}|0+\cdots+(p-1)\rangle=0$, i.e., $a_{00}^{t}=a_{ii}$ for $i=1,2,\cdots,p-1$.

 Therefore, we have
\begin{equation}
\begin{split}
M_{t}^{\dag}M_{t}=
\left[
  \begin{array}{cccccc}
    a_{00}^{t} &0 &\cdots &0 &a_{0p}^{t} &\cdots \\
    0 &a_{00}^{t} &\cdots &0  &a_{1p}^{t} &\cdots \\
    \vdots &\vdots &\ddots &\vdots &\vdots &\ddots \\
    0 &0 &\cdots &a_{00}^{t} &a_{2p}^{t} &\cdots \\
    a_{p0}^{t} &a_{p1}^{t} &\cdots &a_{p(p-1)}^{t} &a_{pp}^{t} &\cdots \\
    \vdots &\vdots &\ddots &\vdots &\vdots &\ddots \\
    \end{array}
\right]
\end{split}\nonumber
\end{equation}

Now we consider the probability of the measurement outcome corresponding to the measurement operator $M_{t}$ for each of the $2p-1$ states. It is easy to see
$$\langle\psi_{i}|(M_{t}^{\dag}M_{t})\otimes(I^{\dag}I)|\psi_{i}\rangle=a_{00}^{t}\,\,\,(\forall\,i\in\{1,\,2,\,\cdots,2p-1\})$$
for $t=1,2,\cdots, l$, where $\sum_{t=1}^{l}a_{00}^{t}=1$ according to the completeness of the measurement operators. This means any one of the $2p-1$ states can lead to the outcome that is corresponding to $M_{t}$ with the same probability $a_{00}^{t}$, i.e., the measurement $\{M_{t}\otimes I_{B}\}$ is trivial to the $2p-1$ states. That is, Alice cannot get any information about which the measured state will be by the measurement $\{M_{t}\}$. Then if Bob starts with a nondisturbing measurement after Alice has done, he cannot get any useful information as Alice does because of the symmetry of the $2p-1$ states. In fact, if Bob goes first and then Alice performs a nondisturbing measurement, they cannot distinguish the $2d-1$ states, either.

Therefore, the $2p-1$ states cannot be perfectly distinguished by LOCC. This completes the proof.
\end{proof}

By Theorem 2, the parameter $p$ can be an arbitrary   integer from $3$  to $m$. We have the following corollary directly according to Theorem 2 when $p=3$.

\begin{corollary}
In $m\otimes n$, the following five orthogonal product states,
\begin{eqnarray}
\nonumber
&&|\psi_{1}\rangle=\frac{1}{\sqrt{2}}|1\rangle_{A}|0-1\rangle_{B},\\
\nonumber
&&|\psi_{2}\rangle=\frac{1}{\sqrt{2}}|2\rangle_{A}|0-2\rangle_{B},\\
&&|\psi_{3}\rangle=\frac{1}{\sqrt{2}}|0-1\rangle_{A}|2\rangle_{B},\\
\nonumber
&&|\psi_{4}\rangle=\frac{1}{\sqrt{2}}|0-2\rangle_{A}|1\rangle_{B},\\
\nonumber
&&|\psi_{5}\rangle=\frac{1}{3}|0+1+2\rangle_{A}|0+1+2\rangle_{B}
\end{eqnarray}
are locally indistinguishable, where $3\leq m\leq n$.
\end{corollary}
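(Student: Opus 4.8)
The plan is to derive this statement as the immediate $p=3$ instance of Theorem 2. Since the hypothesis $3\le m\le n$ guarantees $m\ge 3$, the choice $p=3$ lies in the admissible range (from $3$ to $m$), so Theorem 2 applies. The only thing to carry out is the bookkeeping showing that the $2p-1=5$ states produced by the family (6) with $p=3$ coincide with the five states $|\psi_1\rangle,\ldots,|\psi_5\rangle$ written in the corollary.

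Concretely, for $p=3$ the index $i$ in (6) ranges over $i=1,2$. The first block $|\psi_i\rangle=\frac{1}{\sqrt{2}}|i\rangle_A|0-i\rangle_B$ then yields $|\psi_1\rangle$ and $|\psi_2\rangle$. The second block $|\psi_{i+p-1}\rangle=|\psi_{i+2}\rangle=\frac{1}{\sqrt{2}}|0-i\rangle_A|j\rangle_B$ yields, for $i=1$, the state $|\psi_3\rangle$ with $j=i+1=2$ (since $1\le i\le p-2$), and, for $i=p-1=2$, the state $|\psi_4\rangle$ with $j=1$. Finally $|\psi_{2p-1}\rangle=|\psi_5\rangle=\frac{1}{3}|0+1+2\rangle_A|0+1+2\rangle_B$. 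This reproduces the list in the corollary verbatim, and local indistinguishability is then inherited verbatim from Theorem 2.

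There is essentially no obstacle here; the only point that needs a line of care is resolving the case distinction in the definition of $j$ in (6), namely that $p=3$ forces $j=2$ for $i=1$ and $j=1$ for $i=2$, which is precisely what makes the two second-block states come out as $|\psi_3\rangle$ and $|\psi_4\rangle$. If a self-contained argument is preferred over invoking Theorem 2, one can simply rerun its short computation with $p=3$: mutual orthogonality of the postmeasurement states forces Alice's first POVM element $M_t^{\dag}M_t$ to be diagonal on $\mathrm{span}\{|0\rangle,|1\rangle,|2\rangle\}$ with all three diagonal entries equal to $a_{00}^t$, so every one of the five states gives the outcome corresponding to $M_t$ with the same probability $a_{00}^t$ and the first measurement is trivial; the symmetry of the five states under interchanging the $A$ and $B$ subsystems then disposes of the case in which Bob measures first, and hence the states cannot be perfectly distinguished by LOCC.
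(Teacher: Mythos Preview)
Your proposal is correct and matches the paper's approach exactly: the paper simply states that the corollary follows directly from Theorem~2 by setting $p=3$, which is precisely what you do (with more explicit bookkeeping than the paper bothers to write out). The optional self-contained rerun of the Theorem~2 argument you sketch at the end is also in line with the paper's method, just specialized to $p=3$.
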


Specially, the five states of (7) is a UPB when $m=n=3$. In fact, Bennett \emph{et al.} \cite{CHB1999} exhibits two results in [3]. One is that a UPB is not completable even in a locally extended Hilbert space. The other is that if a
set of orthogonal product states is exactly measurable by
LOCC, then the set can be completed in some extended
space. Thus, it is obvious that the five states of (7) are uncompletable and locally indistinguishable in $m\otimes n$ by the two results, which is coincident with Corollary 3. Since any $4$ orthogonal product states are shown to be locally distinguishable \cite{DiVincenzo2003}, it is easy to see that five is the smallest number of uncompletable and locally indistinguishable product states.
\section{Conclusion}
In this paper, we construct an completable orthogonal product basis with $4p-4$ members that cannot be perfectly distinguished by LOCC in $m\otimes n$, where $3\leq m\leq n$, $p$ is an arbitrary integer from $3$ to $m$, and give a simple but quite effective proof. As a special case, we get eight orthogonal product states that can be completable and cannot be locally distinguished in $m\otimes n$ ($3\leq m\leq n$). We show eight is so far the smallest number of locally indistinguishable and completable orthogonal product states.
We give a very simple proof for the local indistinguishability of the $d^{2}$ states in \cite{Zhang2014} and its subset with $6d-9$ states  in \cite{Wang2015}. On the other hand, we construct a samll locally indistinguishable orthogonal product basis with $2p-1$ members in $m\otimes n$, which is maybe uncompletable, where $3\leq m\leq n$ and $p$ is an arbitrary integer from $3$ to $m$. Our work will be useful for us to understand the structure both of completable and uncompletable product bases that cannot be distinguished by LOCC.

\begin{acknowledgments}
This work is supported by NSFC (Grant Nos. 61272057,
61572081, 61201431), Beijing Higher Education Young
Elite Teacher Project (Grant Nos. YETP0475, YETP0477).
\end{acknowledgments}

\nocite{*}
\bibliography{apssamp}
\end{document}